\documentclass[pra, showkeys, reprint, nofootinbib, superscriptaddress]{revtex4-2}
\usepackage{qcircuit}
\usepackage[english]{babel}
\usepackage[utf8]{inputenc}
\usepackage[colorinlistoftodos, color=green!40, prependcaption]{todonotes}
\usepackage{color}
\usepackage{amsthm}
\usepackage{physics}
\usepackage{subcaption} 
\usepackage{mathtools}
\usepackage{amsmath,amssymb}
\usepackage{xcolor}
\usepackage{graphicx}
\usepackage{soul}
\usepackage{braket}
\usepackage{bbm}
\usepackage{mathbbol}
\usepackage{enumitem}
\usepackage{float}

\usepackage[pdftex, pdftitle={Article}, pdfauthor={Author}, hidelinks]{hyperref}
\usepackage[inkscapelatex=false]{svg}
\DeclareMathOperator{\Span}{span}

\newcommand{\eye}{\mathbb{1}}

\newtheorem{definition}{Definition}
\newtheorem{remark}{Remark}

\newtheorem{lemma}{Lemma}
\newtheorem{proposition}{Proposition}

\begin{document}

\title{Quantum State Certification via Effective Parent Hamiltonians from Local Measurement Data} 

\author{Guy-Philippe Nadon}
\affiliation{École de technologie supérieure, Université du Québec, Montréal, Québec, Canada}

\author{Guanyi Heng}
\affiliation{École de technologie supérieure, Université du Québec, Montréal, Québec, Canada}
\affiliation{Department of Electrical and Computer Engineering, McGill University, Montréal, Québec, Canada}

\author{Pacôme Gasnier}
\affiliation{École de technologie supérieure, Université du Québec, Montréal, Québec, Canada}

\author{Antoine Lemelin}
\affiliation{École de technologie supérieure, Université du Québec, Montréal, Québec, Canada}

\author{Camille Coti}
\affiliation{École de technologie supérieure, Université du Québec, Montréal, Québec, Canada}


\author{Zeljko Zilic}
\affiliation{Department of Electrical and Computer Engineering, McGill University, Montréal, Québec, Canada}
\affiliation{École de technologie supérieure, Université du Québec, Montréal, Québec, Canada}

\author{Mikko Möttönen}
\affiliation{QMill Oy, Keilaranta 12 D, 02150 Espoo, Finland}

\author{Ville Kotovirta}
\affiliation{QMill Oy, Keilaranta 12 D, 02150 Espoo, Finland}

\author{Toni Annala}
\affiliation{Department of Mathematics, University of Toronto, Bahen Centre for Information Technology, 40 St.\ George Street, Room 6290, Toronto, Ontario, Canada M5S 2E4}
\affiliation{QMill Oy, Keilaranta 12 D, 02150 Espoo, Finland}

\author{Ernesto Campos}
\affiliation{École de technologie supérieure, Université du Québec, Montréal, Québec, Canada}
\affiliation{Escuela de Ingenieria y Ciencias, Tecnologico de Monterrey, Av. Heroico Colegio Militar 4700, 31300, Chihuahua, Mexico}

\author{Jacob Biamonte}
\email{jacob.biamonte@etsmtl.ca}
\thanks{Homepage: \url{https://profs.etsmtl.ca/jbiamonte/}}
\affiliation{École de technologie supérieure, Université du Québec, Montréal, Québec, Canada}

\begin{abstract}
The preparation and certification of quantum states is a fundamental challenge across quantum information technology. We introduce a tomography-free state certification method that lower-bounds the fidelity by estimating expectation values of engineered parent-Hamiltonian terms from local measurement data. We apply this framework to construct a parent Hamiltonian that enables certification and variational optimization across the Dicke-state family, which includes the single-excitation $W_n$ state. 
We experimentally validate the framework on IBM quantum hardware, certifying genuine multipartite entanglement for $W_n$ states up to six qubits and establishing positive lower bounds on the state fidelity up to thirteen qubits. For Dicke states with two- and three-excitations, we certify genuine multipartite entanglement up to seven qubits. Within this stringent certification framework, these results constitute among the largest witness-certified demonstrations of such states on a programmable quantum processor.
\end{abstract}

\maketitle

Precise validation and certification of quantum states underpin quantum information technologies. Because full quantum state tomography scales poorly with system size, many alternatives instead bound fidelity or entanglement from restricted measurement data, including entanglement witnesses~\cite{GuhneToth2009,Guhne2010Dicke,Guhne2014Witnessing}, direct fidelity estimation~\cite{James2001,daSilva2011,Guhne_2007}, symmetry-based tomography~\cite{Toth2010PI,Schmied2016}, tensor-network-based tomography \cite{Cramer2010EfficientQST}, reduced-density-matrix methods~\cite{Linden2002,Sawicki2012}, compressed sensing~\cite{Gross2010,Flammia2012}, and randomized measurements~\cite{Huang2020}. These methods replace full reconstruction with operational guarantees while avoiding the exponential cost of complete tomography.

Our approach does so in a particularly direct way. We engineer parent Hamiltonians with a unique target ground state and nonzero spectral gap, then estimate their energies from local observables to obtain certified lower bounds on the target-state fidelity. The Hamiltonian is used only as a classical pseudo-Hamiltonian: it need not be physically implemented, and the protocol requires only local measurements and classical post-processing. Its cost scales polynomially with the number of parent-Hamiltonian terms.

We focus on the $W_n$ state, a standard benchmark with direct experimental relevance due to applications in loss-tolerant quantum networking~\cite{DurVidalCirac2000,Bennett1993Repeaters,Epping2017MultipartiteQKD}. Although $W_n$ and higher Dicke states admit symmetry-based characterization schemes, including permutationally invariant tomography and tailored entanglement witnesses~\cite{Fannes1992,PerezGarcia2007,Guhne2010Fidelity,Moroder2013,Guhne2014Few,Toth2010PI,Schmied2016,GuhneToth2009,Guhne2010Dicke}, our method instead derives fidelity bounds and entanglement certification directly from locally measured parent-Hamiltonian data. It is therefore complementary to inference-based and model-assisted tomography~\cite{toolbox,daSilva2011,lange2023adaptive,lohani2021experimental,cimini2024benchmarking,innan2024quantum}, while deliberately avoiding full state reconstruction, likelihood- or Bayesian-based inference~\cite{hu2025full,catalano2025experimentalpreparationwstates}, and post-processing such as measurement mitigation or zero-noise extrapolation~\cite{nation2021scalable,giurgica2020digital}. The resulting bounds follow directly from measured local data, without optimization over candidate density matrices or classical filtering, rescaling, or inversion of the outcome distribution.  These fidelity bounds then enable entanglement witness evaluation.  

In the next section, we show how measured energies yield fidelity lower bounds, construct parent Hamiltonians and entanglement witnesses for Dicke states, present the experimental results, and conclude with a discussion of the method and its limitations.  Appendix \ref{app:variational_principle} includes the supporting data sets and proofs.  

\section{Stability Bound as a Tomography-Free Certification Tool}

We consider a non-negative Hamiltonian $H$ with a nondegenerate ground state $\ket{\psi}$ and a spectral gap $\Delta \geq 1$.
A stability bound relates the energy expectation value $\Tr(H\rho)$ of a prepared state $\rho$ to its overlap with $\ket{\psi}$.

\begin{lemma}[Stability lower bound]
Let $H$ be a positive semidefinite Hamiltonian with ground-state projector $\ket{\psi}\!\bra{\psi}$ of eigenvalue $0$, and let the spectral gap above the ground state be $\Delta\ge 1$.
Then for any density operator $\rho$ on the same Hilbert space,
\begin{equation}
\label{eq:stability}
\bra{\psi}\rho\ket{\psi}
\;\ge\;
1 - \frac{\Tr(H\rho)}{\Delta}.
\end{equation}
\end{lemma}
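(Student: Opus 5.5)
The plan is to prove the operator inequality $H \succeq \Delta\,(\eye - \ket{\psi}\!\bra{\psi})$ and then take its expectation in $\rho$. Write the spectral decomposition $H = \sum_k E_k P_k$. Here $P_0 = \ket{\psi}\!\bra{\psi}$ is the rank-one ground-state projector with $E_0 = 0$. Every other eigenvalue satisfies $E_k \ge \Delta$ for $k\ge 1$, by the definition of the spectral gap above the ground state (nondegeneracy is used here: the $0$-eigenspace is exactly $\Span\{\ket{\psi}\}$). Since $\sum_{k\ge 1} P_k = \eye - P_0$, we get
\begin{equation*}
H = \sum_{k\ge 1} E_k P_k \;\succeq\; \Delta \sum_{k\ge 1} P_k = \Delta\,(\eye - \ket{\psi}\!\bra{\psi}).
\end{equation*}
This holds because each $(E_k-\Delta)P_k$ is positive semidefinite.

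Next I pair this with the density operator. For any positive semidefinite $A \succeq 0$ and density operator $\rho\succeq 0$, we have $\Tr(A\rho) = \Tr(\rho^{1/2} A \rho^{1/2}) \ge 0$. Applying this with $A = H - \Delta(\eye - \ket{\psi}\!\bra{\psi})$ gives
\begin{equation*}
\Tr(H\rho) \ge \Delta\bigl(\Tr\rho - \bra{\psi}\rho\ket{\psi}\bigr) = \Delta\bigl(1 - \bra{\psi}\rho\ket{\psi}\bigr),
\end{equation*}
using $\Tr\rho = 1$. Dividing by $\Delta>0$ (guaranteed since $\Delta\ge 1$) and rearranging yields \eqref{eq:stability}.

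There is no real obstacle; the only points needing care are the following.
\begin{enumerate}
\item The gap must be read as $\min_{k\ge1}E_k - E_0 = \Delta$, which gives $E_k\ge\Delta$ because $E_0=0$.
\item In infinite dimensions one would use the spectral theorem with projection-valued measures, but here the Hilbert space is finite-dimensional (qubits), so the finite decomposition suffices.
\end{enumerate}
The hypothesis $\Delta\ge1$ is stronger than needed; only $\Delta>0$ is used. Rescaling $H$ would always achieve $\Delta\ge1$, and the bound is invariant under such rescaling.
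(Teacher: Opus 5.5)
Your proof is correct and is essentially the paper's argument. The paper writes the populations $\rho_k=\langle\psi_k|\rho|\psi_k\rangle$ in the eigenbasis of $H$ and uses $\lambda_k/\Delta\ge 1$ for $k>0$ to get $\sum_{k>0}\rho_k\le \Tr(\rho H)/\Delta$, which is exactly the expectation in $\rho$ of your operator inequality $H\succeq \Delta(\eye-|\psi\rangle\langle\psi|)$; your remark that only $\Delta>0$ is actually needed is also right.
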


In particular, the bound is nontrivial whenever $\Tr(H\rho)\le \Delta$, and it underpinned the universality proofs of feed-forward–restricted variational quantum computation \cite{UVQC}, establishing the ansatz class as a computational model. Here we reinterpret it operationally: Eq.~\eqref{eq:stability} provides a tomography-free fidelity guarantee, computed in the classical outer loop using only local measurement data. 

Crucially, estimating $\Tr(H\rho)$ requires only local measurements since $H$ admits a Pauli decomposition. No state reconstruction, inference procedure, or sampling-based fidelity estimator is required. For more details see Appendix \ref{app:variational_principle}.

\section{Parent Hamiltonians to Certify Dicke States}

Given a desired target state $\ket{\psi}$, any Hamiltonian $H$ satisfying the assumptions of the stability lemma can be used to certify any state $\ket{\psi}$.

Operationally, the Hamiltonian is measured term-wise via local measurements,
\begin{equation}
H = \sum_l h_l \sigma_l,
\end{equation}
where $\sigma_l$ are Pauli strings.
The expectation value $\Tr(H\rho)$ is obtained by classical post-processing of measurement outcomes.  This method appears regularly in variational quantum computation \cite{cerezo2021variational, UVQC, 2017Natur.549..242K}.

The lower bound in Eq.~\eqref{eq:stability} then serves two roles:
(i) it provides a certified lower bound on the fidelity,
(ii) it functions as an entanglement witness, since sufficiently low energy excludes all separable states (see Appendix~\ref{app:witness}).

The Dicke states are stringent benchmarks whose parent Hamiltonians have been studied extensively within tensor-network formalisms~\cite{Verstraete2006PEPS}.
Originating in condensed-matter and quantum-optical physics~\cite{Dicke1954Superradiance}, they are highly entangled outside the $0$- and $n$-excitation sectors and non-stabilizer~\cite{Toth2007DickeWitness} with applications in loss-tolerant quantum networking~\cite{DurVidalCirac2000,Bennett1993Repeaters,Epping2017MultipartiteQKD}. 

The $n$-qubit Dicke state with $k$ excitations is
\begin{equation}
\label{eq:Dicke-state}
\begin{split}
\ket{D_n^{(k)}}
&=
\binom{n}{k}^{-1/2}
\sum_{\substack{x \in \{0,1\}^n \\ |x| = k}}
\ket{x}
\\
&=
\binom{n}{k}^{1/2}
\, \Pi_{\mathrm{sym}}
\, \ket{1^k}\ket{0^{\,n-k}}
\end{split}
\end{equation}
where the sum runs over all computational basis states of Hamming weight $k$, and
\(
\Pi_{\mathrm{sym}} = \tfrac{1}{n!}\sum_{\chi \in S_n} P_\chi
\)
is the projector onto the fully symmetric subspace $S_n$.

We use a parent Hamiltonian, the ground-state space of which is $\ket{D_n^{(k)}}$,
\begin{equation}
\label{eq:Dicke_hamiltonian}
H_n^{(k)} = \frac{1}{n} \sum_{j<\ell} (\eye - S_{j\ell}) + (P - k \cdot \eye)^2,
\end{equation}
where $S_{j\ell}$ is the swap operator of the state of qubits $j$ and $\ell$, and $P$ is the Hamming-weight operator,
\begin{equation}
P = \sum_{j=1}^n \ket{1}_j\!\bra{1},
\end{equation}
where the notation $\ket{1}_j\!\bra{1}$ means the projector $\ket{1}\!\bra{1}$ acting on the $j$th qubit and identity everywhere else.

This Hamiltonian is related to the Lipkin--Meshkov--Glick model~\cite{Moroder2013Certification, Ribeiro2008LMG}, differing by the scale factor $1/(2n)$ to force a constant unit gap in the first term and the choice of excitation sector, parameterized by $k$ in the second term.
Through the introduction of the tuning parameter $k$, one selects which Dicke state spans the kernel of the Hamiltonian, thereby enabling certification of the full family of Dicke states.  A proof of the following is given in Appendix~\ref{app:W_Hamiltonian}: 

\begin{proposition}
The Hamiltonian~\eqref{eq:Dicke_hamiltonian} is a frustration-free parent Hamiltonian for the $n$-qubit Dicke state with $k$ excitations and satisfies the conditions of Proposition~\ref{prop:variational}.
\end{proposition}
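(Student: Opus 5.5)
The plan is to diagonalize the two pieces of $H_n^{(k)}$ simultaneously. Write $H_n^{(k)} = A + B$ with $A = \frac{1}{n}\sum_{j<\ell}(\eye - S_{j\ell})$ and $B = (P-k\eye)^2$. First we check that $[A,B]=0$. Each swap $S_{j\ell}$ preserves Hamming weight, so it commutes with $P$. Hence $A$ and $B$ can be diagonalized together. It follows that the spectrum of $H_n^{(k)}$ is the set of sums $a+b$ over joint eigenvalues, and it suffices to understand $A$ on each weight sector and $B$ on each sector.

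For $A$, we use Schur--Weyl duality for qubits. The space $(\mathbb{C}^2)^{\otimes n}$ decomposes into irreducible $SU(2)$ spins $J=n/2, n/2-1,\dots$. The standard identity
\begin{equation}
\sum_{j<\ell} S_{j\ell} = \sum_{j<\ell}\tfrac12(\eye + \vec\sigma_j\cdot\vec\sigma_\ell) = \tfrac{n(n-1)}{4}\eye - \tfrac{3n}{4}\eye + \vec J^{\,2} + \tfrac{n}{4}\cdot 0
\end{equation}
should be verified, and its constants corrected as needed. Concretely, $\sum_{j<\ell}\vec\sigma_j\cdot\vec\sigma_\ell = 2\vec J^{\,2} - \tfrac{3n}{2}$, which gives $\sum_{j<\ell}(\eye - S_{j\ell}) = \tfrac{n(n+2)}{4} - \vec J^{\,2} = \tfrac{n(n+2)}{4} - J(J+1)$. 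This vanishes exactly on $J=n/2$, the symmetric subspace. The next value, at $J=n/2-1$, equals $n$. Therefore $A$ has eigenvalues $0$ on the symmetric subspace and at least $1$ on its orthogonal complement, and $A$ is positive semidefinite.

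For $B$, the operator $P-k\eye$ has integer eigenvalues $w-k$ on the weight-$w$ sector. So $B \succeq 0$, it vanishes exactly on weight $k$, and it is at least $1$ elsewhere.

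Combining the two, $H_n^{(k)}\succeq 0$. Its kernel is the intersection of the symmetric subspace with the weight-$k$ sector. The symmetric subspace is spanned by the Dicke states $\ket{D_n^{(w)}}$, one per weight, so this intersection is one-dimensional and spanned by $\ket{D_n^{(k)}}$. On any joint eigenvector outside the kernel, either $a\ge 1$ or $b\ge 1$, with both nonnegative. This gives the gap $\Delta\ge 1$. Frustration-freeness follows because $\ket{D_n^{(k)}}$ is annihilated by every term $\eye - S_{j\ell}$ (it is permutation symmetric) and by $B$. If one wants local terms, expanding $B$ shows it is a sum of one- and two-body Pauli terms. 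Thus the hypotheses of the stability lemma (Proposition~\ref{prop:variational}) hold.

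The main obstacle is getting the exact constant in the $A$ spectrum, i.e.\ confirming that the gap of $\sum(\eye-S_{j\ell})$ is exactly $n$ so that the $1/n$ normalization yields a unit gap. To establish this, I would compute $\sum_{j<\ell}(\eye - S_{j\ell})$ directly on a weight-one state orthogonal to $\ket{W_n}$. An example is $\ket{10\cdots0}-\ket{010\cdots0}$, which lies in $J=n/2-1$. Checking that it gives eigenvalue $n$ confirms the constant.
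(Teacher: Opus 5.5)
Your proposal is correct and takes essentially the same approach as the paper: split $H_n^{(k)}$ into the swap part $A$ and $B=(P-k\eye)^2$ (the paper's $H_1,H_2$), use $[A,B]=0$, intersect the kernels (symmetric subspace $\cap$ weight-$k$ sector), and read off $\Delta\ge 1$ from the gaps of the two commuting nonnegative pieces. Two of your steps go beyond the paper's text: your Casimir identity $\sum_{j<\ell}(\eye-S_{j\ell})=\tfrac{n(n+2)}{4}-\vec J^{\,2}$, with gap exactly $n$ at $J=n/2-1$, proves what the paper only asserts as ``well known,'' and you keep general $k$ throughout, whereas the paper's written proof specializes to $k=1$.
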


\section{Dicke Family Entanglement Witnesses}

We certify genuine multipartite entanglement using a standard fidelity-based witness. 
An entanglement witness is a Hermitian operator whose expectation value is nonnegative on all biseparable states, but negative for at least one genuinely $n$-partite entangled state.

For a target normalized pure state $\ket{\psi}$ and an experimentally prepared state $\rho$, the fidelity is
\begin{equation}
F_\psi = \bra{\psi}\rho\ket{\psi}.
\end{equation}

A generic witness detecting $\ket{\psi}$ is given by \cite{En_det}
\begin{equation}\label{eqn:witness}
\mathcal{W} = \alpha\,\eye - \ket{\psi}\!\bra{\psi},
\end{equation}
where $\alpha$ is the maximal fidelity achievable by any biseparable state,
\begin{equation}
\alpha = \max_{\ket{\phi}\in\mathcal{B}} |\braket{\phi|\psi}|^2.
\end{equation}
Here, $\mathcal{B}$ denotes the set of pure product states across all bipartitions $A\!:\!B$ of the $n$-qubit system.

By construction, $\Tr(\mathcal{W}\rho)\ge 0$ for all biseparable states. 
Thus, observing $\Tr(\mathcal{W}\rho)<0$ certifies genuine $n$-partite entanglement \cite{alpha_calculation}.

For a fixed bipartition $A\!:\!B$, the maximal fidelity with product states is given by the largest Schmidt coefficient of $\ket{\psi}$ across that cut 
and the witness parameter is obtained by maximizing over all bipartitions \cite{toolbox}. 

\begin{proposition}
\label{alphadickeprop}
For the Dicke state $\ket{D_n^{k}}$ with $n$ qubits and $k$ excitations, the maximal biseparable fidelity is
\[
\alpha =
\begin{cases}
\displaystyle \frac{n-k}{n}, & k<\frac{n}{2},\\[6pt]
\displaystyle \frac{n}{2(n-1)}, & k=\frac{n}{2}\ \text{(even $n$)},\\[8pt]
\displaystyle \frac{k}{n}, & k>\frac{n}{2}.
\end{cases}
\]
\end{proposition}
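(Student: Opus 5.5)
We use Schmidt decompositions and take a case analysis over bipartition sizes. By permutation symmetry of $\ket{D_n^{(k)}}$, only the size $m=|A|$ of a bipartition $A\!:\!B$ matters, and we may take $1\le m\le n/2$. Splitting each weight-$k$ string into its restrictions to $A$ and $B$ gives the Schmidt decomposition
\begin{equation*}
\ket{D_n^{(k)}}=\sum_{j}\sqrt{p_m(j)}\,\ket{D_m^{(j)}}_A\ket{D_{n-m}^{(k-j)}}_B,
\qquad
p_m(j)=\frac{\binom{m}{j}\binom{n-m}{k-j}}{\binom{n}{k}}.
\end{equation*}
The states $\ket{D_m^{(j)}}$ for different $j$ are orthonormal, and so are the $\ket{D_{n-m}^{(k-j)}}$. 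The squared Schmidt coefficients are therefore the hypergeometric probabilities $p_m(j)$. By the fact quoted before the proposition,
\begin{equation*}
\alpha=\max_{1\le m\le n/2}\ \max_j\, p_m(j).
\end{equation*}
Since $\ket{D_n^{(k)}}$ and $\ket{D_n^{(n-k)}}$ are related by a local $X^{\otimes n}$, which preserves $\alpha$, it suffices to treat $k\le n/2$. The case $k>n/2$ then follows with $k\to n-k$.

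The next step evaluates the two smallest cuts explicitly. For $m=1$ the probabilities are $p_1(0)=(n-k)/n$ and $p_1(1)=k/n$, so the maximum is $(n-k)/n$. For $m=2$ the candidates are
\begin{equation*}
p_2(0)=\frac{(n-k)(n-k-1)}{n(n-1)},\qquad
p_2(1)=\frac{2k(n-k)}{n(n-1)},\qquad
p_2(2)=\frac{k(k-1)}{n(n-1)}.
\end{equation*}
Comparing with $(n-k)/n$: $p_2(0)$ and $p_2(2)$ are clearly smaller, and $p_2(1)\le (n-k)/n$ holds iff $2k\le n-1$. This inequality holds exactly when $k<n/2$. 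When $k=n/2$ instead, $p_2(1)=n/(2(n-1))>1/2=p_1$, which explains the middle case of the formula.

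The remaining and main step is to show that cuts with $m\ge 3$ never beat $\max(p_1,p_2)$. The first thing I would try is to exploit the unimodality (log-concavity) of the hypergeometric distribution in $j$. Its mode $j^*$ is at $\lfloor (m+1)(k+1)/(n+2)\rfloor$. I would then compare $p_m(j^*)$ with $p_{m-1}(\cdot)$ through the ratio
\begin{equation*}
\frac{p_m(j)}{p_{m-1}(j)}=\frac{m\,(n-m-k+j+1)}{(m-j)(n-m+1)},
\end{equation*}
together with the analogous ratio against $p_{m-1}(j-1)$. The goal is to show that the modal probability is nonincreasing in $m$ for $2\le m\le n/2$. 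Intuitively, a larger subsystem spreads the weight over more values of $j$, making the distribution flatter.

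If that monotonicity is messy, I would fall back on a direct bound. Using a Stirling/variance estimate, one shows $\max_j p_m(j)\le 1/2$ whenever $m\ge 3$ and $1\le k\le n-1$, except in small explicitly checkable cases. Since $(n-k)/n\ge 1/2$ and $n/(2(n-1))>1/2$, this suffices. I expect this uniform control over all $m\ge3$, especially near $k\approx n/2$ where the competing values are close to $1/2$, to be the delicate part. Small $n$ would be verified by direct enumeration.
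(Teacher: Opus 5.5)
Your setup is the same as the paper's, and it is correct: the Schmidt decomposition with hypergeometric weights $p_m(j)$, the $X^{\otimes n}$ symmetry for $k>n/2$, and the explicit $m=1,2$ values, which explain where $(n-k)/n$ and $n/(2(n-1))$ come from. The gap is the step you call the main one: showing that no cut with $m\ge 3$ beats these values. Neither proposed route works as stated.

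The modal probability is \emph{not} nonincreasing in $m$ on $2\le m\le n/2$. For $n=8$, $k=4$, one has $\max_j p_3(j)=30/70$ but $p_4(2)=36/70$. For $n=10$, $k=4$, one has $\max_j p_4(j)=90/210$ but $p_5(2)=100/210$. The maximum oscillates with whether the mean $mk/n$ sits near an integer or near a half-integer. So no step-by-step comparison $m-1\to m$ via your two ratios can give monotonicity.

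The fallback claim that $\max_j p_m(j)\le 1/2$ for $m\ge3$ ``except in small cases'' fails on infinite families. For $k=1$, $p_m(0)=(n-m)/n>1/2$ for every $3\le m<n/2$. For any fixed $k$, $p_3(0)=\binom{n-3}{k}/\binom{n}{k}\to 1$. Moreover $1/2$ is the wrong target when $k<n/2$. You must stay below the $k$-dependent threshold $(n-k)/n$, which can be arbitrarily close to $1$. A concentration estimate of order $1/\sigma$ cannot do this when $\sigma^2$ is only of order $mk/n$.

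The paper does not prove this step from first principles either. It takes from Bergmann--G\"uhne that for $k<n/2$ the maximum over cuts is attained at $|A|=1$, and from T\'oth that for $k=n/2$ it is attained at $|A|=2$. You could close your argument by citing those results. To prove it yourself you need a different mechanism for $m\ge 3$. For instance, separate the regime where the mode is $j=0$ from the regime with an interior mode $j^*\ge 1$. In the first regime, $p_m(0)=\binom{n-m}{k}/\binom nk$ is decreasing in $m$ and never exceeds $p_1(0)$. The second regime needs a genuinely sharp bound on the modal hypergeometric probability against $(n-k)/n$, respectively $n/(2(n-1))$. As written, your proposal establishes only the lower bound $\alpha\ge\max(p_1,p_2)$, not equality.
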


The derivation follows from the Schmidt decomposition of Dicke states across arbitrary bipartitions and is confirmed in Appendix~\ref{app:witness}.

\section{Experimental Results}

We experimentally validate tomography-free certification using effective parent Hamiltonians
evaluated from local measurement data.
As a benchmark family we consider the $n$-qubit $W$ states $\ket{W_n}=\ket{D_n^{(1)}}$
for system sizes $n=2, 3,\dots,16$.
For each $n$, we (i) prepare a state $\rho_n$ using a compiled W-state circuit on superconducting
hardware, (ii) estimate the energy $\langle H_n\rangle=\Tr(H_n\rho_n)$ of an effective parent
Hamiltonian $H_n$ using only local Pauli measurements and classical post-processing, and
(iii) convert the measured energy into a certified lower bound on the fidelity with $\ket{W_n}$
using the stability bound.

All experiments were executed on IBM \texttt{ibm\_quebec} (Heron r2) device.
For each $n$ we choose a connected set of $n$ physical qubits on the device coupling graph
(Fig.~\ref{fig:qpu_layout}) to avoid SWAP overhead.
All circuits are transpiled into the native gate set using a fixed layout and routing strategy
to ensure comparable depth across $n$.

We prepare $\ket{W_n}$ using circuits compiled from a standard W-state construction
into the native gate set of the target processor.
A representative example for $n=4$ is shown in Fig.~\ref{fig:W_circuits}.
For each $n$, the compiled circuit contains $2n-3$ two-qubit entangling gates and $7n-8$
single-qubit gates.

For certification we evaluate an effective parent Hamiltonian $H_n$ whose unique ground state
is $\ket{W_n}$ and whose spectral gap satisfies $\Delta=1$.
We use the $k=1$ member of the Dicke-family parent Hamiltonians, written in Pauli form as
\begin{align}
H_n
&= C \cdot \eye 
-\Big(\frac{n}{2}-1\Big)\sum_{j=1}^n Z_j
-\frac{1}{2n}\sum_{j<\ell}\big(X_jX_\ell + Y_jY_\ell\big)
\notag\\
&\quad
+\Big(\frac{1}{2}-\frac{1}{2n}\Big)\sum_{j<\ell} Z_jZ_\ell,
\label{eq:Hn_pauli}
\end{align}
with $C=\big(\frac{n}{2}-1\big)^2+\frac{2n-1}{4}$.
Accordingly, the measured energy is obtained from one- and two-body correlators:
\begin{align}
\langle H_n\rangle
&= C
-\Big(\frac{n}{2}-1\Big)\sum_{j=1}^n \langle Z_j\rangle
-\frac{1}{2n}\sum_{j<\ell}\big(\langle X_jX_\ell\rangle + \langle Y_jY_\ell\rangle\big)
\notag\\
&\quad
+\Big(\frac{1}{2}-\frac{1}{2n}\Big)\sum_{j<\ell} \langle Z_jZ_\ell\rangle.
\label{eq:energy_from_correlators}
\end{align}

All terms in \eqref{eq:Hn_pauli} are products of identical Pauli operators.
Therefore three global measurement settings suffice for all $n$:
measuring all qubits in the $Z$ basis yields $\{\langle Z_j\rangle,\langle Z_jZ_\ell\rangle\}$, measuring all qubits in the $X$ basis yields $\{\langle X_jX_\ell\rangle\}$, and measuring all qubits in the $Y$ basis yields $\{\langle Y_jY_\ell\rangle\}$. Each basis was executed with $S=16384$ shots, for a total of $3S$ shots per system size.

We construct $H_n$ as a sum of sparse Pauli operators and evaluate $\langle H_n\rangle$ using  Qiskits' runtime primitives. In our case, the Pauli terms partition into three commuting groups associated with the $X$, $Y$, and $Z$ measurement bases, so the estimator can evaluate the energy from three basis settings and return $\langle H_n\rangle$ computed from shot averages.

By construction, $H_n\succeq 0$ has unique ground state $\ket{W_n}$ and a constant spectral gap
$\Delta=1$. The stability bound therefore yields a certified fidelity lower bound from the measured energy:
\begin{align}
& F_n := \bra{W_n}\rho_n\ket{W_n}
\;\ge\;
F^{\mathrm{lb}}_n
\notag\\
&\quad := \max\Big\{0,\;1-\frac{\langle H_n\rangle}{\Delta}\Big\}
= \max\{0,\;1-\langle H_n\rangle\}.
\label{eq:fidelity_lb}
\end{align}
We certify genuine multipartite entanglement using the fidelity witness threshold for $W_n$ states,
\begin{equation}
\alpha_n = \max_{\text{biseparable }\sigma}\bra{W_n}\sigma\ket{W_n} = 1-\frac{1}{n},
\end{equation}
so that $F^{\mathrm{lb}}_n>\alpha_n$ certifies genuine $n$-partite entanglement.

Statistical uncertainty arises from finite-shot estimation of correlators in
\eqref{eq:energy_from_correlators}.
Error bars in Fig.~\ref{fig:W_experiment} represent one standard error of the mean (SEM)
computed from shot noise.

Concretely, for each basis $b\in\{X,Y,Z\}$ and each shot $s$, the measurement yields outcomes
$m^{(s)}_j\in\{\pm1\}$ for all qubits. We compute the relevant per-shot contribution
$g_b^{(s)}$ to \eqref{eq:energy_from_correlators} (e.g., sums of $m^{(s)}_jm^{(s)}_\ell$
with the appropriate coefficients), and estimate the basis contribution by the sample mean
$\widehat{g}_b = \frac{1}{S}\sum_{s=1}^S g_b^{(s)}$.
The SEM for each basis is $\mathrm{SE}(\widehat{g}_b)=\sqrt{\widehat{\mathrm{Var}}(g_b)/S}$,
and the total energy SEM is obtained by combining independent bases in quadrature:
\begin{equation}
\mathrm{SE}(\langle H_n\rangle)
=
\sqrt{\mathrm{SE}(\widehat{g}_X)^2+\mathrm{SE}(\widehat{g}_Y)^2+\mathrm{SE}(\widehat{g}_Z)^2}.
\end{equation}
Since $F^{\mathrm{lb}}_n=\max\{0,1-\langle H_n\rangle\}$ for $\Delta=1$, the uncertainty on
$F^{\mathrm{lb}}_n$ is inherited directly from that of $\langle H_n\rangle$ in the nontrivial regime
$\langle H_n\rangle<1$.

Figure~\ref{fig:W_experiment} reports $\langle H_n\rangle$ and the derived bounds
$F^{\mathrm{lb}}_n$ for $n=2,\dots,16$.
Certification is nontrivial whenever $\langle H_n\rangle<\Delta=1$, in which case
$F^{\mathrm{lb}}_n=1-\langle H_n\rangle$ provides a guaranteed fidelity lower bound without
tomographic reconstruction.
Entanglement certification holds when $F^{\mathrm{lb}}_n>\alpha_n$, and breaks down at larger $n$
as two-qubit gate noise accumulates with circuit depth.

\section{Discussion}

We introduced a tomography-free framework for quantum state certification based on parent Hamiltonians evaluated from local measurement data. From the same measured observables, the method yields rigorous lower bounds on target-state fidelity together with entanglement-witness tests, eliminating the need for full quantum state tomography. Because the guarantees follow directly from the spectral properties of the Hamiltonian, the approach is general and applies to arbitrary Hamiltonians with a gapped nondegenerate ground state, independent of the specific physical implementation.

Experimentally, we applied this framework to Dicke states prepared on a programmable quantum processor. Using only local correlators, we certify genuine multipartite entanglement for $W_n$ states up to six qubits and obtain nontrivial fidelity lower bounds for systems as large as thirteen qubits. For higher Dicke states with two and three excitations, we certify multipartite entanglement up to seven qubits while maintaining meaningful fidelity bounds for larger system sizes. Within this stringent certification setting, these results represent some of the largest witness-certified realizations of such states on programmable quantum hardware.

Because the guarantees rely only on experimentally accessible local observables, the framework integrates naturally with existing characterization techniques and complements reconstruction- and mitigation-based approaches designed for accurate observable estimation. More broadly, parent-Hamiltonian certification provides a scalable route to validating structured many-body quantum states without full tomography, offering a practical tool for verifying increasingly complex quantum states on near-term quantum devices. In this way, rigorous certification of large entangled states can be achieved directly from experimentally accessible local measurements.

\onecolumngrid

\begin{figure}[htbp]
  \centering

  \includegraphics[width=0.4\columnwidth]{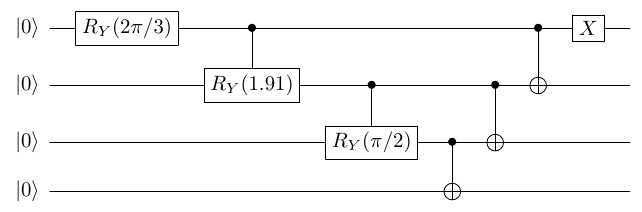}

  \vspace{0.8em}
  {\small{Textbook circuit for a 4-qubit $W_4$ state}}

  \vspace{1.2em}

  \includegraphics[width=\columnwidth]{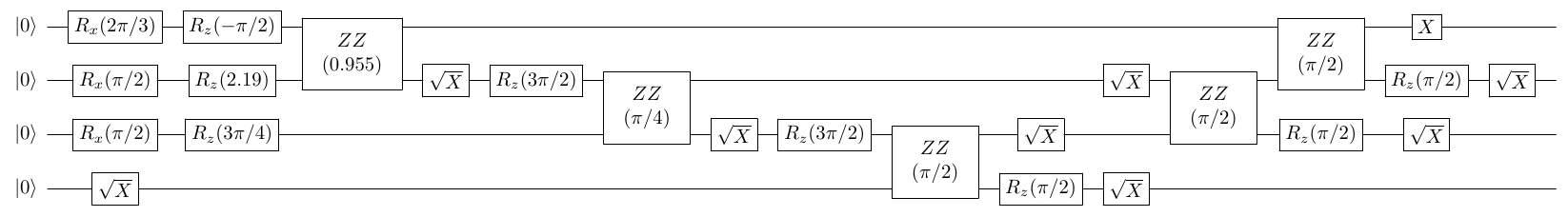}

  \vspace{0.8em}
  {\small{Native gate-set circuit for a 4-qubit $W_4$ state}}

  \caption{State-preparation circuits for $W_4$ states.
  The lower circuit shows the compilation of the textbook construction into the
  native gate set of the superconducting processor.}
  \label{fig:W_circuits}
\end{figure}

\begin{figure}[ht!] 
\centering 
\includegraphics[width=\linewidth]{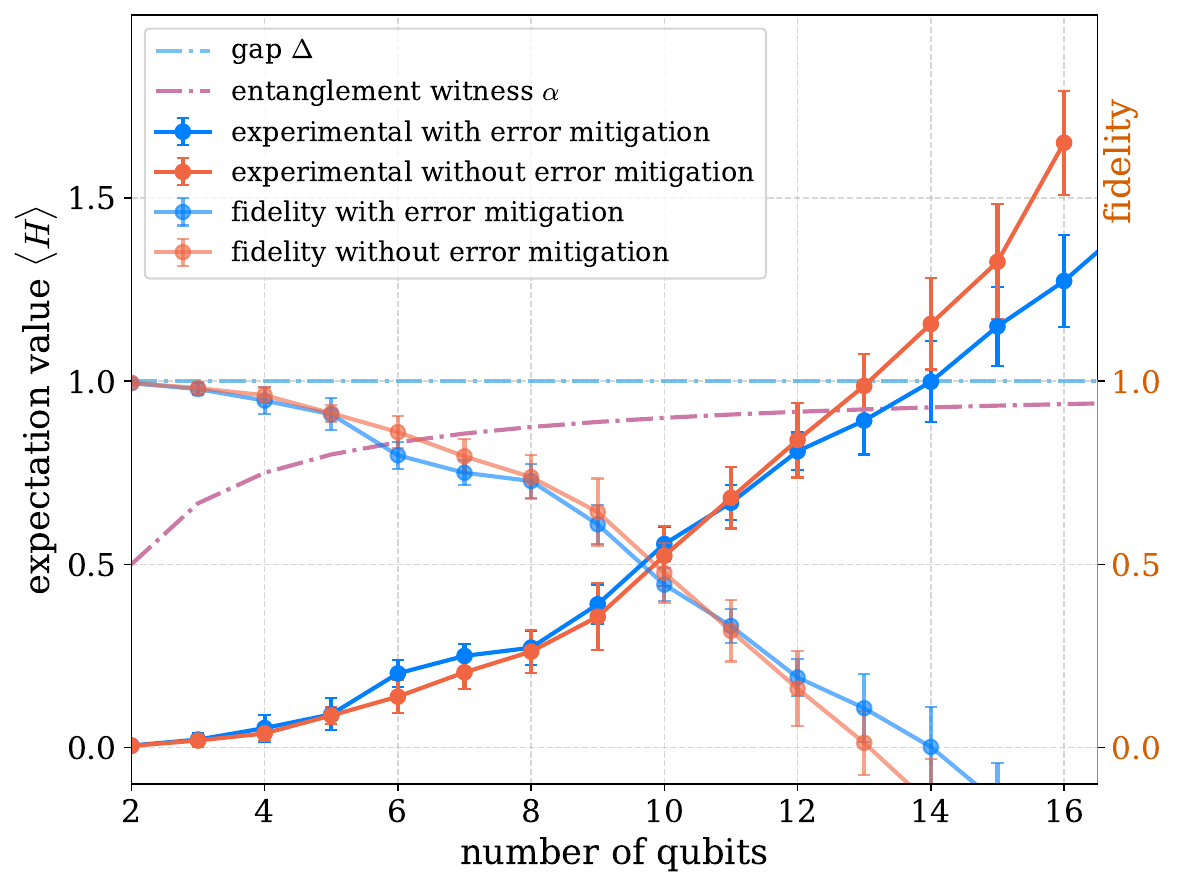}  \caption{Expectation value versus number of qubits for prepared $W_n$ states. The blue dashed line indicates the spectral gap $\Delta=1$, and the purple curve shows the entanglement-witness threshold $\alpha=1-1/n$. } 

\label{fig:W_experiment} 
\end{figure}

\section{Acknowledgment}

The authors thank Sean Wagner, Ibrahim Shehzad, and the IBM team for valuable feedback. This research was enabled in part by support from Calcul Québec and the Digital Research Alliance of Canada. We also thank the Plateforme d’Innovation Numérique et Quantique du Québec (PINQ²) for access to the ibm\_quebec system and the computational resources required for this study. This work was partially supported by Mitacs/PINQ² (Project IT42780) and by the MEIE Principal Ministry Research Chair in Quantum Computing held by J.B.~at ÉTS Montréal, Université du Québec.  The code used to generate the numerical results in this study, along with the associated datasets, is available on GitHub~\cite{QuantumStateCertParentHamiltonian2026}.  

\newpage 

\bibliographystyle{unsrt}
\bibliography{refs}

\newpage 

\appendix

\section{Variational principle}
\label{app:variational_principle}
\begin{proposition}\label{prop:variational}
    Let $H \in \text{herm}_{\mathbb C} (2^n \times 2^n)$.
    Let normalized $\ket{\psi} \in [\mathbb{C}^2]^{\otimes n}$, and let $\rho$ be a normalized density state in $\text{herm}_{\mathbb C} (2^n \times 2^n)$.
    We have:
    \begin{enumerate}
        \item $H$ has a 1D kernel, $\ker{H} = \Span\{{\ket{\psi}}\}$,
        \item $H$ is non-negative, that is $H \succeq 0$,
        \item $H$ is gapped, that is all non-zero eigenspaces of $H$ are at least $\Delta \geq 1$, that is $(\ker{H})^\perp \geq \Delta$.
    \end{enumerate}
    If $\rho$ is such that 
    \begin{equation}
        \Tr\{H \rho\} < \Delta, 
    \end{equation}
    then
    \begin{equation}
         1 - \frac{\Tr\{\rho H\}}{\Delta} \leq \bra{\psi}\rho\ket{\psi}\leq 1- \frac{\Tr\{\rho H\}}{\max H},\label{eqn:bounds}
    \end{equation}
    where $\max H$ is the largest eigenvalue of $H$.  
\end{proposition}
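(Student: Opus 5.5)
The plan is to diagonalize $H$ and reduce both inequalities to elementary facts about a probability distribution over eigenvalues. By hypotheses 1–3, write the spectral decomposition
\begin{equation}
H = 0\cdot\ket{\psi}\!\bra{\psi} + \sum_{i\ge 1} \lambda_i \ket{e_i}\!\bra{e_i},
\qquad \Delta \le \lambda_i \le \max H ,
\end{equation}
where $\{\ket{\psi},\ket{e_i}\}$ is an orthonormal eigenbasis. The kernel is spanned by $\ket{\psi}$ (hypothesis 1), every other eigenvalue is nonnegative (hypothesis 2), and all nonzero eigenvalues are at least $\Delta$ (hypothesis 3).

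Next define the populations $p_0=\bra{\psi}\rho\ket{\psi}$ and $p_i=\bra{e_i}\rho\ket{e_i}$. Since $\rho\succeq 0$ and $\Tr\rho=1$, these satisfy $p_0,p_i\ge 0$ and $p_0+\sum_i p_i = 1$. Expanding the trace in the eigenbasis, only the diagonal entries of $\rho$ contribute, so
\begin{equation}
\Tr(H\rho)=\sum_{i\ge1}\lambda_i p_i .
\end{equation}

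Both bounds then follow from $\Delta \le \lambda_i \le \max H$, multiplied by $p_i\ge 0$ and summed:
\begin{equation}
\Delta\,(1-p_0)\;\le\;\Tr(H\rho)\;\le\;(\max H)\,(1-p_0).
\end{equation}
Dividing the left inequality by $\Delta>0$ and rearranging gives $p_0 \ge 1-\Tr(H\rho)/\Delta$. Dividing the right inequality by $\max H$ gives $p_0 \le 1-\Tr(H\rho)/\max H$. Together these are Eq.~\eqref{eqn:bounds}, and the lower bound alone recovers the stability lemma of Eq.~\eqref{eq:stability}.

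There is no real obstacle; the only points needing care are the edge cases in the divisions. Division by $\max H$ requires $\max H>0$. This holds because hypothesis 1 makes the kernel one-dimensional, so for $n\ge1$ the space has dimension at least $2$ and $H\neq 0$; hence $\max H\ge\Delta\ge1$. The hypothesis $\Tr(H\rho)<\Delta$ is not actually used by either inequality. Its role is only to make the lower bound strictly positive, i.e.\ nontrivial, and I would remark on this.
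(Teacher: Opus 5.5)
Your proof is correct and takes essentially the same route as the paper: the spectral decomposition of $H$, the diagonal populations of $\rho$ in that eigenbasis, and $\lambda_i \ge \Delta$ off the kernel, which give $1 \le \rho_0 + \Tr(\rho H)/\Delta$. The paper's proof stops at the lower bound, so your estimate $\lambda_i \le \max H$, together with the check $\max H \ge \Delta > 0$, supplies the upper bound in Eq.~\eqref{eqn:bounds} that the paper states but never derives; your remark that $\Tr(H\rho)<\Delta$ only makes the lower bound nontrivial is also correct.
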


A closely related variational theorem was developed by one of the authors to show that a model of computation based on expectation-value measurements is computationally universal \cite{UVQC}. The present theorem originates in that setting, while the lower bound \eqref{eqn:bounds} traces back to earlier work of Eckart \cite{Eckart1930}. Here we include an elementary proof of the bound. 

\begin{proof}
    Express $H$ using the spectral decomposition where the eigenvalues are ordered monotonically increasing in $k$, 
    \begin{align}
\qquad
H &= \sum_{k} \lambda_k \, |\psi_k\rangle\langle\psi_k|, 
\quad \lambda_0 < \lambda_1 \le \lambda_2 \le \cdots,\quad \lambda_1 = \Delta \geq 1,\quad  \lambda_0 = 0 \label{eq:spectral},
\qquad
\end{align}
and express the density matrix $\rho$ in the eigenbasis of H, 
\begin{align}
\rho &\mapsto \sum_{k} \rho_k \, |\psi_k\rangle\langle\psi_k|, \quad
\rho_k = \langle\psi_k|\rho|\psi_k\rangle,\;
\rho_k \ge 0,\;
\sum_k \rho_k = 1 \label{eq:rho-decomp}.
\end{align}
From \eqref{eq:spectral} and \eqref{eq:rho-decomp}
\begin{equation}
 \Tr{\rho H} = \sum_{k\geq0} \lambda_k \rho_k = \sum_{k>0} \lambda_k \rho_k.\label{eq:energy}
\end{equation}
\text{From \eqref{eq:rho-decomp}, \eqref{eq:energy} and $\frac{\lambda_k}{\Delta}\geq1$ for $k>0$, we establish the following inequality} 
\begin{equation}
    \sum_{k>0}\rho_k \leq \frac{1}{\Delta}\sum_{k>0} \lambda_k \rho_k.\label{eq:ineq_lambda_rho}
\end{equation}
We note that
\begin{equation}
    \Tr{\rho}=\rho_0 + \sum_{k>0}^{2^n-1} \rho_k = 1,  \label{eq:trace_rho}
\end{equation}
then construct the following inequality obtained by using \eqref{eq:ineq_lambda_rho} and \eqref{eq:trace_rho} 
\newline where $\frac{\lambda_k}{\Delta}\geq1, \forall  k>0$ 
\begin{equation}
1 \leq \rho_0 + \frac{\sum_{k>0} \lambda_k \rho_k}{\Delta} \label{eq:inequality}
\end{equation}
\text{and hence}
\begin{equation}
1\leq\rho_0 + \frac{\Tr{\rho H}}{\Delta}. 
\label{eq:bound-general}
\end{equation}
\end{proof}

\section{Entanglement witnesses}
\label{app:witness}

\begin{proposition}
The maximal fidelity with the product states across any partition is
\begin{equation}
    F_{\max}(A:B) \;=\; \Big(\max_k \sqrt{\lambda_k}\Big)^2
    \;=\; \max_k \lambda_k,
\end{equation}
and the witness parameter is obtained by maximizing over all bipartitions \cite{toolbox}:
\begin{equation}
    \alpha \;=\; \max_{A:B}\;(\max_k \lambda_k).
\end{equation} 
\end{proposition}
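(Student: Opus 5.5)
The plan is to fix a bipartition $A\!:\!B$ of the $n$ qubits and write $\ket{\psi}$ in Schmidt form,
\begin{equation*}
\ket{\psi}=\sum_k \sqrt{\lambda_k}\,\ket{a_k}_A\ket{b_k}_B,\qquad \lambda_k\ge 0,\quad \sum_k\lambda_k=1,
\end{equation*}
where $\{\ket{a_k}\}$ and $\{\ket{b_k}\}$ are orthonormal. The claim then splits into an upper bound and a matching attaining state for
\begin{equation*}
F_{\max}(A\!:\!B)=\max_{\ket{\alpha}_A,\ket{\beta}_B}\big|\bra{\alpha}\!\otimes\!\bra{\beta}\,\psi\rangle\big|^2 .
\end{equation*}
Once this is established for every cut, the statement $\alpha=\max_{A:B}(\max_k\lambda_k)$ follows immediately. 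The set $\mathcal{B}$ in the definition of the witness parameter is the union over bipartitions of the product states $\ket{\alpha}_A\ket{\beta}_B$. A maximum over a union is the maximum of the per-cut maxima.

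For the upper bound, I would introduce the $d_A\times d_B$ coefficient matrix $M$ of $\ket{\psi}$ in product bases. In these bases $\braket{\alpha\beta|\psi}=\alpha^\dagger M\bar\beta$, where $\alpha,\beta$ are unit coefficient vectors. By the variational characterization of the operator norm (Cauchy--Schwarz),
\begin{equation*}
|\alpha^\dagger M\bar\beta|\le \|M\|_{\mathrm{op}}=\sigma_{\max}(M).
\end{equation*}
The Schmidt coefficients are precisely the singular values of $M$, so $\sigma_{\max}(M)=\max_k\sqrt{\lambda_k}$. Squaring gives $F_{\max}(A\!:\!B)\le(\max_k\sqrt{\lambda_k})^2=\max_k\lambda_k$.

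For attainment, I would choose $\ket{\alpha}=\ket{a_{k^*}}$ and $\ket{\beta}=\ket{b_{k^*}}$, where $k^*$ is the index of the largest Schmidt coefficient. Orthonormality of the Schmidt vectors then gives overlap $\sqrt{\lambda_{k^*}}$, so the bound is achieved with equality.

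The main subtlety is a conceptual one rather than a computational one. The witness bound must hold for mixed biseparable states, not only for pure product states. I would handle this with a convexity remark. The fidelity $\bra{\psi}\sigma\ket{\psi}$ is linear in $\sigma$. Every biseparable $\sigma$ is a convex combination of pure states, each of which is a product across some cut. Therefore $\bra{\psi}\sigma\ket{\psi}\le\alpha$, which justifies $\Tr(\mathcal{W}\rho)\ge0$ on all biseparable states, as used in Eq.~\eqref{eqn:witness}. In particular, the pure-product maximum computed above is indeed the relevant threshold.
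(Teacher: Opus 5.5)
Your proof is correct and follows essentially the same route as the paper. Both write the overlap with a product state as the bilinear form $\langle a|C|b^{*}\rangle$ in the coefficient matrix, and identify its maximum over unit vectors with the largest singular value of $C$, which is the largest Schmidt coefficient. Your explicit attainment step via the top Schmidt pair, and your convexity remark extending the threshold to mixed biseparable states, supply details the paper leaves implicit.
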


\begin{proof}
Let $\ket{\psi}$ be a pure state and fix a bipartition $A:B$, and let $\ket{\phi}=\ket{a}_A\otimes\ket{b}_B$ be an arbitrary normalized product state across this cut. 
Choosing an orthonormal product basis $\{\ket{\ell}_A \otimes \ket{m}_B\}_{\ell,m}$ for this cut, 
we may write
\begin{equation}\label{eq:state_expansion}
|\psi\rangle \;=\; \sum_{\ell,m} c_{\ell m}\,|\ell\rangle_A |m\rangle_B,
\qquad
|\phi\rangle \;=\; \Big(\sum_{\ell} a_{\ell}\,|\ell\rangle_A\Big)\otimes\Big(\sum_{m} b_{m}\,|m\rangle_B\Big).
\end{equation}

where $C=(c_{\ell m})$ is the coefficient matrix and $\mathbf{a}=(a_\ell)$, $\mathbf{b}=(b_m)$ are normalized vectors with $\|\mathbf{a}\|_2=\|\mathbf{b}\|_2=1$. 
The overlap is
\begin{equation}
    \braket{\phi|\psi}
    \;=\;\sum_{\ell m} a_\ell^{*}\,c_{\ell m}\,b_m^{*}
    \;=\;\bra{a}\,C\,\ket{b^{*}}.
\end{equation}
Maximizing over normalized $\mathbf{a},\mathbf{b}$ gives
\begin{equation}
    \max_{\ket{\phi}}
    \big|\braket{\phi|\psi}\big|
    \;=\;\max_{\substack{\|\mathbf{a}\|_2=1\\ \|\mathbf{b}\|_2=1}}
    \big|\bra{a}\,C\,\ket{b^{*}}\big|
    \;=\;\max_k \sqrt{\lambda_k(CC^\dagger)},
\end{equation}
where $\{\lambda_k(CC^\dagger)\}$ are the eigenvalues of $CC^\dagger$. 
Thus, the Schmidt coefficients of $\ket{\psi}$ with respect to the bipartition $A:B$ are precisely $\{\sqrt{\lambda_k(CC^\dagger)}\}$ \cite{alpha_calculation}.
\end{proof}

\begin{proposition}\label{prop:Dicke_max_biseparable_fidelity}The maximal biseparable fidelity $\alpha$ for the Dicke state $\ket{D_n^{k}}$, where $n$ denotes the number of qubits and $k$ denotes the number of excitations, is

\[
\alpha \;=\;
\begin{cases}
\displaystyle \frac{n-k}{n}, & k<\frac{n}{2},\\[6pt]
\displaystyle \frac{n}{2(n-1)}, & k=\frac{n}{2}\ \text{(even $n$)},\\[8pt]
\displaystyle \frac{k}{n}, & k>\frac{n}{2}.
\end{cases}
\]
\end{proposition}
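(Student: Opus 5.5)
We apply the preceding proposition on Schmidt coefficients: for every bipartition $A\!:\!B$ we compute the Schmidt coefficients of $\ket{D_n^{k}}$ and then maximize the largest squared one over all cuts. Because $\ket{D_n^{k}}$ is permutation invariant, the Schmidt spectrum depends only on the size $m=|A|$. Since $\ket{D_n^{k}}$ and $\ket{D_n^{n-k}}$ are related by a global bit flip, we may also assume $k\le n/2$ and recover the case $k>n/2$ at the end by $k\mapsto n-k$. This is why the answer for $k>n/2$ is $k/n$.

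\textbf{Step 1 (Schmidt decomposition).} Split each Hamming-weight-$k$ string into $j$ excitations on $A$ and $k-j$ on $B$, and group the terms. This gives
\begin{equation*}
\ket{D_n^{k}}=\sum_{j}\sqrt{p_m(j)}\,\ket{D_m^{j}}_A\ket{D_{n-m}^{k-j}}_B,\qquad p_m(j)=\frac{\binom{m}{j}\binom{n-m}{k-j}}{\binom{n}{k}} .
\end{equation*}
The states $\ket{D_m^{j}}$ for different $j$ are orthonormal, and likewise on $B$, so this is already a Schmidt decomposition. The squared Schmidt coefficients are therefore the hypergeometric probabilities $p_m(j)$, and
\begin{equation*}
\alpha=\max_{1\le m\le n/2}\;\max_j\, p_m(j),
\end{equation*}
where we used the symmetry $m\leftrightarrow n-m$.

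\textbf{Step 2 (small cuts).} For $m=1$ the values are $p_1(0)=(n-k)/n$ and $p_1(1)=k/n$. For $m=2$ they are
\begin{equation*}
\frac{(n-k)(n-k-1)}{n(n-1)},\qquad \frac{2k(n-k)}{n(n-1)},\qquad \frac{k(k-1)}{n(n-1)} .
\end{equation*}
For $k\le n/2$ the first and third are at most $(n-k)/n$ by direct comparison. The middle value satisfies $\tfrac{2k(n-k)}{n(n-1)}\le \tfrac{n-k}{n}$ if and only if $2k\le n-1$. This holds when $k<n/2$ and fails exactly when $k=n/2$, where the middle value equals $\tfrac{n}{2(n-1)}>\tfrac12$. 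This accounts for the three cases in the statement.

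\textbf{Step 3 (main obstacle: larger cuts).} It remains to show that $p_m(j)$ for $3\le m\le n/2$ never exceeds the value obtained from $m\in\{1,2\}$. I would first use the one-step recursion in $m$, obtained by conditioning on the last draw:
\begin{equation*}
p_m(j)=p_{m-1}(j)\,\frac{n-k-m+1+j}{n-m+1}+p_{m-1}(j-1)\,\frac{k-j+1}{n-m+1}.
\end{equation*}
The weights sum to more than one, so this does not give monotonicity directly. My plan is instead to bound the mode of the hypergeometric distribution, which is attained at $j^*\approx (m+1)(k+1)/(n+2)$, using the ratio $p_m(j+1)/p_m(j)$. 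I would then show that $m\mapsto p_m(j^*(m))$ is nonincreasing for $m\ge2$ by comparing consecutive values after cancelling the binomials.

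If that comparison becomes messy, the fallback is a variance argument. The hypergeometric variance $mk(n-k)(n-m)/(n^2(n-1))$ grows with $m$ on $m\le n/2$, and a unimodal integer-valued distribution with large variance cannot place much mass on a single point. I expect this step to be where the real work lies. The edge cases $k=0$ and $m=2$ with $k=n/2$ will be checked by hand.
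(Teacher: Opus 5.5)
\textbf{Gap: Step 3 is not established.} Your Steps 1 and 2 coincide with the paper's argument: hypergeometric Schmidt weights, and reduction of $k>n/2$ to $n-k$ via $X^{\otimes n}$. Your $m=1,2$ values are also correct. The gap is Step 3, which is the one step the paper does not prove itself. It imports from Bergmann--G\"uhne that the cut $1\!:\!(n-1)$ is optimal for $k<n/2$, and from T\'oth that $2\!:\!(n-2)$ is optimal for $k=n/2$. You neither cite these results nor supply a proof, and neither of your routes works as described.

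\textbf{Why your two routes fail.} The map $m\mapsto\max_j p_m(j)$ is not nonincreasing for $m\ge 2$. For $n=6$, $k=2$ one has $\max_j p_2(j)=8/15<9/15=p_3(1)$. For $n=8$, $k=2$ one has $\max_j p_3(j)=15/28<16/28=p_4(1)$. The modal mass depends on how close the mean $mk/n$ is to an integer, so comparing consecutive values of $m$ cannot succeed. The variance fallback is too weak exactly where it matters. For $n=7$, $k=3$, $m=3$ the variance is $24/49$, and a geometric law (which is log-concave) with that variance puts mass $\approx 0.73$ on one point. You need $\max_j p_3(j)\le 4/7$, so no bound in terms of the variance alone can give it.

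\textbf{Missing idea: compare every cell with the $m=1$ value, not with neighbouring $m$.} Since $(n-k)/n=\binom{n-1}{k}/\binom{n}{k}$, you need $\binom{m}{j}\binom{n-m}{k-j}\le\binom{n-1}{k}$.
\begin{itemize}
\item If $m\ge 2j+1$, split $\binom{n-m}{k-j}$ by Pascal's rule. One resulting term is exactly the $i=j$ term of $\binom{n-1}{k}=\sum_i\binom{m}{i}\binom{n-m-1}{k-i}$. The other is at most the $i=j+1$ term, because $\binom{m}{j}\le\binom{m}{j+1}$.
\item If $m\le n-2k+2j-1$, split $\binom{m}{j}$ instead. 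Compare with $\binom{n-1}{k}=\sum_i\binom{m-1}{i}\binom{n-m}{k-i}$, using $\binom{n-m}{k-j}\le\binom{n-m}{k-j+1}$.
\end{itemize}
Both conditions fail only if $n-2k+2j\le m\le 2j$, which is impossible for $2k<n$. This proves the case $k<n/2$.

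\textbf{The balanced case $n=2k$.} Every cell with $m\ne 2j$ is at most $1/2<\frac{n}{2(n-1)}$ by the argument above. For $m=2j$, the quantity $\binom{2j}{j}\binom{2k-2j}{k-j}$ has consecutive ratio $\frac{(2j+1)(k-j)}{(j+1)(2k-2j-1)}$, so it decreases and then increases in $j$. On $1\le j\le k-1$ it therefore peaks at $j=1$, where $p_2(1)=\frac{n}{2(n-1)}$.

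\textbf{This needs $k\ge 2$.} For $n=2$, $k=1$ (a Bell state) there is no $2\!:\!(n-2)$ cut, and $\alpha=1/2$, not $1$. So your Step 2 claim that $m=2$ produces the balanced value, and the proposition itself, both need $n\ge 4$ in that case.
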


\begin{proof}
(1) Bipartition (Schmidt) decomposition and what must be maximized.
Fix a bipartition $A:B$ with $|A|=a$, $|B|=b$ and $a+b=n$.
Grouping computational basis states by the number $\beta$ of excitations in subsystem $A$
gives the Schmidt decomposition
\begin{equation}
\label{eq:Dicke-bipartition}
\ket{D_n^{k}} \;=\;
\sum_{\beta}
\sqrt{\lambda_{a:b}(\beta)}\;
\ket{D_{a}^{\beta}}\otimes \ket{D_{b}^{\,k-\beta}},
\qquad
\lambda_{a:b}(\beta)
=
\frac{\binom{a}{\beta}\binom{b}{k-\beta}}{\binom{n}{k}},
\end{equation}
where the sum runs over all $\beta$ such that
$0\le \beta \le a$ and $0\le k-\beta \le b$.
For any pure bipartite state, the maximal overlap with product states across the cut
equals the largest squared Schmidt coefficient. Hence the biseparable fidelity constant is
\begin{equation}
\label{eq:alpha-max}
\alpha
=
\max_{A:B}\;F_{\max}(A:B)
=
\max_{\substack{a, \beta}}
\lambda_{a:\,n-a}(\beta)
=
\max_{\substack{a, \beta}}
\frac{\binom{a}{\beta}\binom{n-a}{k-\beta}}{\binom{n}{k}}.
\end{equation}

(2) Maximization over cuts (three regimes).

Case 1: $k<\frac{n}{2}$.
From ~\cite{BergmannGuehne2013_DickeCriteria},
the maximization in \eqref{eq:alpha-max} is attained at the bipartition that has $a=1$ and $b=(n-1)$.
Evaluating the corresponding term in \eqref{eq:alpha-max} gives the estimate 
\begin{equation}
\label{eq:case1-bound}
\frac{\binom{1}{\beta}\binom{n-1}{k-\beta}}{\binom{n}{k}}
\;\le\;
\frac{n-k}{n},
\qquad \Bigl(k<\frac{n}{2}\Bigr),
\end{equation}
which is tight for the maximizing choice of $\beta$~\cite{BergmannGuehne2013_DickeCriteria}.
Therefore,
\[
\alpha=\frac{n-k}{n},
\qquad \Bigl(k<\frac{n}{2}\Bigr).
\]

Case 2: $k=\frac{N}{2}$ (balanced Dicke state, even $N$).
For the balanced case,~\cite{Toth2007_JOSAB} shows that the maximum in
\eqref{eq:alpha-max} is achieved for bipartition that has $a=2$ and $b=(N-2)$.
In particular, the proof yields the bound
\begin{equation}
\label{eq:case2-bound}
\frac{\binom{2}{\beta}\binom{N-2}{k-\beta}}{\binom{N}{k}}
\;\le\;
\frac{N}{2(N-1)},
\qquad \Bigl(k=\frac{N}{2}\Bigr),
\end{equation}
which is tight at the maximizing value of $\beta$ identified there.
Hence,
\[
\alpha=\frac{N}{2(N-1)},
\qquad \Bigl(k=\frac{N}{2},\; N\ \text{even}\Bigr).
\]

Case 3: $k> N/2$.
Use the local-unitary symmetry
\(
\ket{D_N^{k}} = X^{\otimes N}\ket{D_N^{N-k}}.
\)
Local unitaries do not change Schmidt coefficients (hence do not change $\alpha$), so
$\alpha(N,k)=\alpha(N,N-k)$.
Since $N-k< N/2$ in this regime, Case~1 gives
\[
\alpha(N,k)=\alpha(N,N-k)=\frac{N-(N-k)}{N}=\frac{k}{N}.
\]

Combining the three cases proves the stated formula.
\end{proof}

\section{Dicke states parent Hamiltonians} 
\label{app:W_Hamiltonian}

\begin{definition}[Hamming weight operator] The Hamiltonian 
\begin{equation}\label{eqn:projh}
P = \sum_{j=1}^n \ket{1}_j\!\bra{1} = \frac{1}{2}\left(n\cdot {\eye} - \sum_{j=1}^n   Z_j \right) 
\end{equation}
satisfies $P \ket{x} = \|x\|_1 \ket{x}$ for $n$-long bit strings $x$ where $\|\cdot \|_1$ is the $1$-norm.
\end{definition}

\begin{remark}
The operator $P$ has all the desired properties as listed in Proposition \ref{prop:variational}. The number of terms when $P$ is expressed in the Pauli basis is $n+1$.  
\end{remark}

\begin{proposition} The Hamiltonian \eqref{eq:HW-full}: 
\begin{equation}
\label{eq:HW-full}
H^{(k)}_n = C \mathbb{1} - \left( \frac{n}{2} - k \right) \sum_{j=1}^{n} Z_j 
- \frac{1}{2n} \sum_{j<k} (X_j X_k + Y_j Y_k) 
+ \left( \frac{1}{2} - \frac{1}{2n} \right) \sum_{j<k} Z_j Z_k,
\end{equation}
where
\begin{equation*}
C = \left( \frac{n}{2} - k \right)^2 + \frac{2n - 1}{4},
\end{equation*}
is a frustration-free parent Hamiltonian for the $n$-qubit Dicke states with $k$ excitations which satisfies the conditions of Proposition \ref{prop:variational}.

\end{proposition}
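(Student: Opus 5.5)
We will prove the statement by working with the operator form $H^{(k)}_n=\frac1n\sum_{j<\ell}(\eye-S_{j\ell})+(P-k\eye)^2$ of Eq.~\eqref{eq:Dicke_hamiltonian}. The first step is to check that this agrees with the Pauli expansion \eqref{eq:HW-full}, reading the summation index there as $\ell$ rather than $k$. We use $S_{j\ell}=\tfrac12(\eye+X_jX_\ell+Y_jY_\ell+Z_jZ_\ell)$, so $\eye-S_{j\ell}=\tfrac12(\eye-X_jX_\ell-Y_jY_\ell-Z_jZ_\ell)$. From \eqref{eqn:projh} we have $P-k\eye=(\tfrac n2-k)\eye-\tfrac12\sum_j Z_j$. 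Squaring the latter and using $Z_j^2=\eye$ gives
\[
(\tfrac n2-k)^2\eye-(\tfrac n2-k)\sum_j Z_j+\tfrac n4\eye+\tfrac12\sum_{j<\ell}Z_jZ_\ell .
\]
Adding $\frac1n\cdot\frac12\sum_{j<\ell}(\eye-X_jX_\ell-Y_jY_\ell-Z_jZ_\ell)$ contributes $\frac{n-1}{4}\eye$, the $-\frac1{2n}$ coefficients on $XX+YY$, and $-\frac1{2n}$ on $ZZ$. The constant is then $C=(\tfrac n2-k)^2+\tfrac{2n-1}{4}$ and the $ZZ$ coefficient is $\tfrac12-\tfrac1{2n}$, as claimed. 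This step is routine bookkeeping.

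Next we show positivity and frustration-freeness. Each $\eye-S_{j\ell}$ equals $2\times$ the projector onto the antisymmetric subspace of pair $(j,\ell)$, so it is $\succeq0$. The operator $(P-k\eye)^2$ is diagonal in the computational basis with entries $(|x|-k)^2\ge0$. Hence $H\succeq0$, and $\ket{\psi}\in\ker H$ if and only if $\ket\psi$ is annihilated by every term. Invariance under all transpositions $S_{j\ell}$ means $\ket\psi$ lies in the symmetric subspace, which is spanned by the Dicke states $\ket{D_n^{(m)}}$, $m=0,\dots,n$. Within that subspace, $P-k\eye$ acts as $m-k$ on $\ket{D_n^{(m)}}$, so the kernel is exactly $\Span\{\ket{D_n^{(k)}}\}$. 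Since $\ket{D_n^{(k)}}$ is annihilated by every individual term, the Hamiltonian is frustration-free, which gives conditions 1 and 2 of Proposition~\ref{prop:variational}.

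The main step is the gap $\Delta\ge1$. Both summands commute with $P$ and with the permutation action. The permutation sum $\sum_{j<\ell}(\eye-S_{j\ell})=\binom n2\eye-\sum_{j<\ell}S_{j\ell}$ is, by Schur--Weyl duality, constant on each $SU(2)$ total-spin sector $J$. Its value there is $\binom n2-\big[J(J+1)-\tfrac n4(\tfrac n4\cdot 0)\big]$; more cleanly, we use $\sum_{j<\ell}S_{j\ell}=\tfrac12\big(\vec S^{\,2}\cdot 2-\tfrac{3n}{4}\cdot 2\big)+\tfrac{\binom n2}{1}\cdot\tfrac12$. Concretely, from $S_{j\ell}=\tfrac12\eye+2\vec s_j\cdot\vec s_\ell$ one gets
\[
\sum_{j<\ell}(\eye-S_{j\ell})=\tfrac{n(n-1)}{4}+\tfrac{3n}{4}-J(J+1)=\big(\tfrac n2\big)\big(\tfrac n2+1\big)-J(J+1),
\]
with $J$ the total spin. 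This expression vanishes on the symmetric sector $J=n/2$. For $J\le n/2-1$ it is at least $(\tfrac n2)(\tfrac n2+1)-(\tfrac n2-1)\tfrac n2=n$, so after the $1/n$ prefactor the first term is $\ge1$ off the symmetric subspace. Because both terms are simultaneously diagonalizable and nonnegative, every nonzero eigenvalue falls into one of two cases. In the first, the state lies outside the symmetric subspace, and the first term alone contributes $\ge1$. In the second, it lies in the symmetric subspace with $m\ne k$, where the eigenvalue is $(m-k)^2\ge1$. Hence $\Delta\ge1$, with equality attained, for example, at $\ket{D_n^{(k\pm1)}}$. The obstacle here is carefully justifying the joint diagonalization, namely that $P$ commutes with the total-spin Casimir. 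This holds because $P=\tfrac n2-S_z$ and $[\vec S^{\,2},S_z]=0$.
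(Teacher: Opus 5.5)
Your proof is correct and follows the paper's route: you split $H^{(k)}_n$ into $\frac{1}{n}\sum_{j<\ell}(\eye-S_{j\ell})$ and $(P-k\eye)^2$, intersect their kernels to get $\Span\{\ket{D_n^{(k)}}\}$, and conclude $\Delta\ge 1$ because the two commuting non-negative pieces each have gap at least $1$; beyond the paper, you verify the Pauli expansion, keep $k$ general where the paper's proof silently sets $k=1$, and prove the gap of the swap part via $\frac{1}{n}\big[\frac{n}{2}\big(\frac{n}{2}+1\big)-J(J+1)\big]\ge 1$ for $J\le\frac{n}{2}-1$, where the paper only calls it well known. The one blemish is the first value you state for the sector constant, which reduces to $\binom{n}{2}-J(J+1)$ and is false (it gives $-1$ for $n=2$, $J=1$), so delete it and keep your ``concretely'' formula, which is correct and is the only one the argument uses.
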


\begin{proof}
We write the Hamiltonian \eqref{eq:HW-full} as $H_W = H_1 + H_2$, where
\begin{equation}
\label{eq:H1-H2}
H_1 = \frac{1}{n} \sum_{j< k} (\eye - S_{jk})
\qquad
H_2 = (P - k\cdot \eye)^2,
\end{equation}
with $\mathrm{S}_{jk}$ the swap operator and $P$ is the Hamming weight operator \eqref{eqn:projh}.  $H_1$ is a sum of non-negative projectors, and $H_2$ is quadratic, so $H_W$ is necessarily non-negative.

For any symmetric state $|\phi\rangle$, $\mathrm{S}_{jk}|\phi\rangle = |\phi\rangle$. The sum $\sum_{j<k}\mathrm{S}_{jk}$ attains
its maximum eigenvalue $\binom{n}{2}$ on the symmetric
subspace, which gives
\begin{equation}
\label{eq:kerH1}
\ker H_1
=
\mathrm{Span}\{\,|D_m^{(n)}\rangle : m = 0,1,\ldots,n\,\},
\end{equation}
where $|D_m^{(n)}\rangle$ are the Dicke states and $m$ denotes the number of excitations.
For any computational basis state $|x\rangle$,
\begin{equation}
\label{H2EIGEN}
H_2|x\rangle = (\|x\|_1 - 1)^2 |x\rangle,
\end{equation}
from which it follows that
\begin{equation}
\label{eq:kerH2}
\ker H_2
=
\mathrm{Span}\{\,|x\rangle : \|x\|_1 = 1\,\}.
\end{equation}
Since $[H_1, H_2]=0$ we can consider the intersection of the two kernels, yielding: 
\begin{equation}
\label{eq:kerHW}
\ker H_W
=
\ker H_1 \cap \ker H_2
=
\mathrm{Span}\{|W_n\rangle\},
\end{equation}
where $\ket{W_n}=\ket{D_1^{(n)}}$.
\paragraph*{Spectral gap.}
From Eq.~\eqref{H2EIGEN}, the eigenvalues of $H_2$ are given by
$(\|x\|_1 - 1)^2$, and therefore the spectral gap of $H_2$ is equal to $1$.
The Hamiltonian $H_1$ is non-negative and its spectral gap is well known to be $1$. 
Consequently, any state orthogonal to the ground space of $H_W = H_1 + H_2$
must have energy at least $1$, implying that
$\Delta \ge 1$.
\end{proof} 

\section{Dicke states result} 
\label{app:Dicke_Result}

\begin{figure}[H]
    \centering
    \begin{subfigure}[t]{0.48\textwidth}
        \centering
        \includegraphics[width=\linewidth]{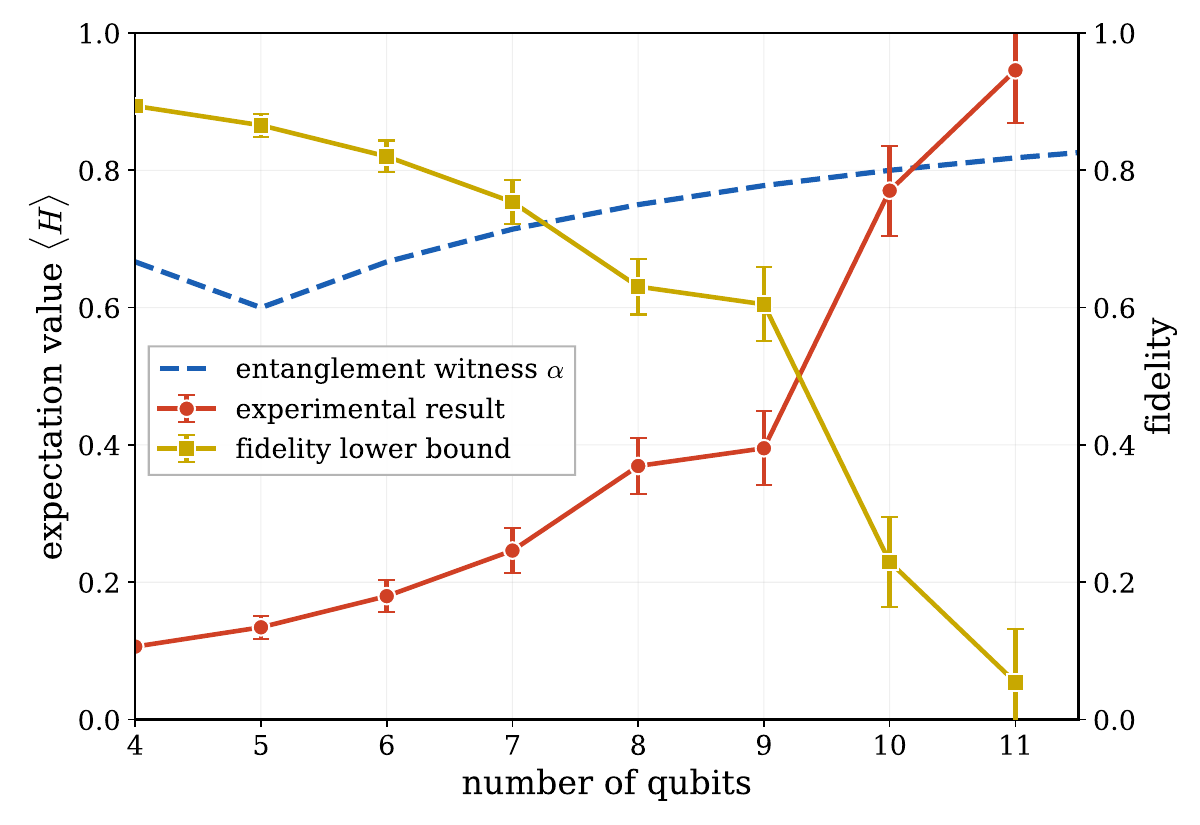}
        \caption{Dicke state with excitation number $k=2$.}
        \label{fig:dicke_k2}
    \end{subfigure}
    \hfill
    \begin{subfigure}[t]{0.48\textwidth}
        \centering
        \includegraphics[width=\linewidth]{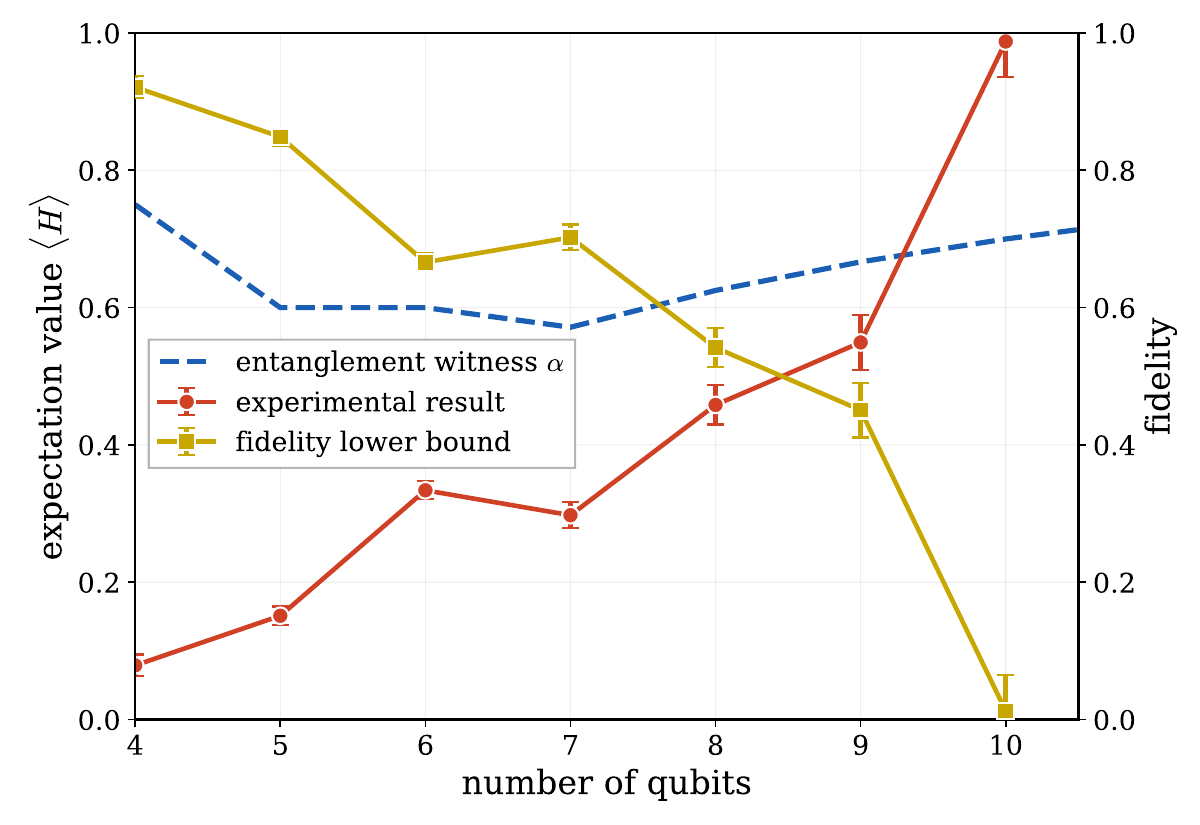}
        \caption{Dicke state with excitation number $k=3$.}
        \label{fig:dicke_k3}
    \end{subfigure}
    
    \caption{
    Expectation value versus the number of qubits for prepared Dicke states.
    Panel (a) corresponds to excitation number $k=2$, and panel (b) to $k=3$.
    The blue dashed curve indicates entanglement-witness threshold as introduced in Proposition~\ref{prop:Dicke_max_biseparable_fidelity}.
    }
    \label{fig:dicke_expectation}
\end{figure}
\begin{figure}[ht!] 
\centering 
\includegraphics[width=1\linewidth]{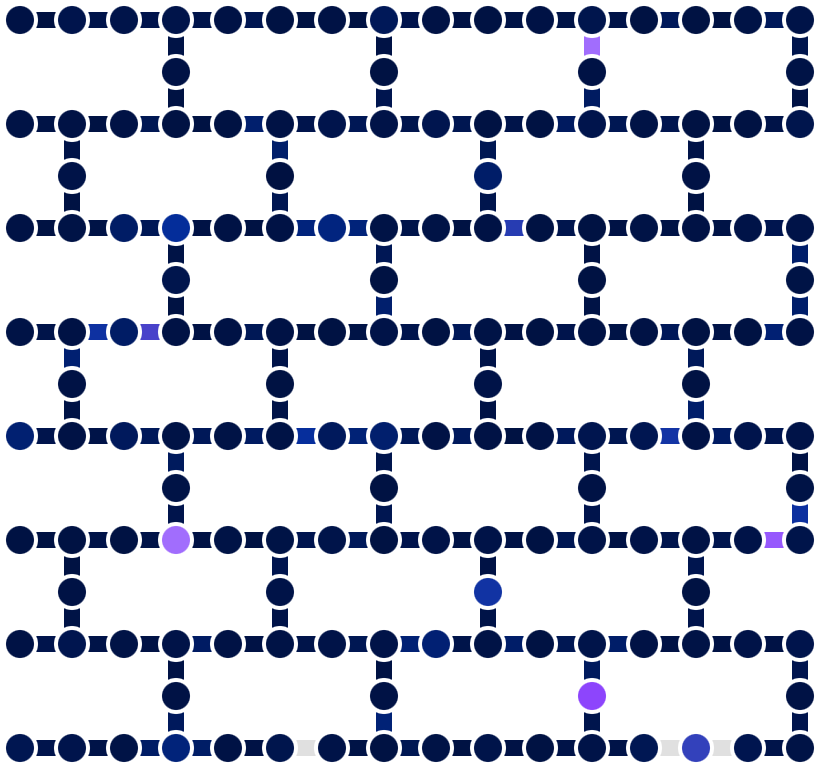}  \caption{The layout of \textit{ibm\_quebec}} 

\label{fig:qpu_layout} 
\end{figure}

\end{document}